\documentclass[conference]{IEEEtran}

\usepackage{cite}
\usepackage{amsmath}
\usepackage{amsfonts}
\usepackage{pifont}
\usepackage{amssymb}
\usepackage{amsthm}
\usepackage{tikz}
\usepackage{graphicx}
    \graphicspath{{../}}
    \DeclareGraphicsExtensions{.pdf}
\usepackage[caption=false,font=footnotesize]{subfig}
\usepackage{multirow}
\usepackage{url}
\usepackage{algorithmic}
\usepackage{enumerate}
\usepackage{makecell}
\theoremstyle{plain}

\newtheorem{theorem}{Theorem}
\newtheorem{lemma}[theorem]{Lemma}

\newtheorem{definition}[theorem]{Definition}

\theoremstyle{definition}



\begin{document}
\title{Achieving Arbitrary Locality and Availability in Binary Codes}

\author{\IEEEauthorblockN{Anyu~Wang and
        Zhifang~Zhang}

\IEEEauthorblockA{Key
Laboratory of Mathematics Mechanization, NCMIS\\
Academy of Mathematics and Systems Science, CAS, Beijing, China\\
Email: \{wanganyu, zfz\}@amss.ac.cn}
}
\maketitle
\thispagestyle{empty}

\begin{abstract}
The $i$th coordinate of an $(n,k)$ code is said to have locality $r$ and availability $t$ if there exist $t$ disjoint groups, each containing at most $r$ other coordinates that can together recover the value of the $i$th coordinate. This property is particularly useful for codes for distributed storage systems because it permits local repair and parallel accesses of hot data.
In this paper, for any positive integers $r$ and $t$, we construct a binary linear code of length $\binom{r+t}{t}$ which has locality $r$ and availability $t$ for all coordinates.
The information rate of this code attains $\frac{r}{r+t}$, which is always higher than that of the direct product code, the only known construction that can achieve arbitrary locality and availability.

\end{abstract}

\section{Introduction}\label{sec1}
Nowadays various forms of data redundancy are used in coding for distributed storage systems to insure data integrity and to improve performance efficiency. Among those the code with \emph{locality} $r$ has become an attractive subject since it was proposed independently by Gopalan et al. \cite{gopalan2012locality}, Oggier et al. \cite{oggier2011self}, and Papailiopoulos et al. \cite{papailiopoulos2012simple}.
More precisely, the $i$th coordinate of a code is said to have locality $r$ if the value at this coordinate can be recovered by accessing at most $r$ other coordinates. In other words, associating each coordinate with a storage node in a distributed storage system, an $(n,k)$ code with repair locality $r~(r \ll k)$ can greatly reduces the disk I/O complexity for node repair. Considering data reliability and storage efficiency, a lot of work studied upper bounds on the minimum distance and  information rate of such codes \cite{gopalan2012locality,forbes2013locality,papailiopoulos2012locally,cadambe2013upper,tamo2014family,wang2014integer}.
Codes attaining these upper bounds were constructed in \cite{tamo2013optimal,tamo2014family,huang2007pyramid}, and some have even found their way into practice \cite{huang2012erasure,sathiamoorthy2013xoring}. By now, a tight upper bound $\frac{r}{r+1}$ has been proven for the information rate, while a complete description for the tight upper bound on the minimum distance remains open.

Things become more complicated when people start considering locality $r$ in the case of multiple node erasures. First, locality $r$ that can tolerate up to $\delta-1$ erasures is realized in \cite{prakash2012optimal,silberstein2013optimal,song2014optimal} by using inner-error-correcting codes of length at most $r+\delta-1$ and minimum distance at least $\delta$. Then another way is developed in \cite{pamiesjuarez2013locally,wang2013repair} by providing $\delta-1$  disjoint local repair groups. A general framework for these works is built in \cite{wang2014repair}. Besides specific code structures, the methods of sequential local repair \cite{prakash2014codes} and cooperative local repair \cite{rawat2014cooperative} are also used for multiple erasures.

Recently, the property of $t$-\emph{availability} which is related to the structure of $\delta-1$ disjoint local repair groups (e.g. $t=\delta-1$ in \cite{wang2013repair}) is investigated further in \cite{rawat2014locality,tamo2014bounds}.
This property is particularly useful because it permits access of a coordinate from multiple ways in parallel, which is appealing in distributed storage systems with \emph{hot data}.

Specifically, the $i$th coordinate of a code is said to have locality $r$ and availability $t$ if there exist $t$ disjoint groups, each containing at most $r$ other coordinates that can together recover the value of the $i$th coordinate.
Codes that have locality $r$ and availability $t$ for information coordinates (e.g., all systematic
coordinates in a systematic linear code) are studied in \cite{rawat2014locality,wang2013repair}.
More precisely, the authors in \cite{wang2013repair} give an upper bound on the minimum distance for any $[n,k,d]_q$ linear codes that have locality $r$ and availability $t$ for information coordinates, i.e.,
\begin{equation*}
d \le n-k+2 - \lceil \frac{t(k-1)+1}{t(r-1)+1} \rceil.
\end{equation*}
They further prove the existence of codes attaining this upper bound when $n \ge k(r t+1)$.
In \cite{rawat2014locality}, an upper bound is derived for a special class of codes of which any repair group contains only $1$ parity symbol, and some explicit constructions attaining this bound are given there.

In this paper we focus on
codes that have locality $r$ and availability $t$ for {\it all coordinates}, for which the only known bounds are due to Tamo et al.\cite{tamo2014bounds}, i.e.
\begin{equation}\label{EqTamoRateBound}
\frac{k}{n} \le \prod_{i=1}^t \frac{1}{1+\frac{1}{ir}},
\end{equation}
and
\begin{equation*}
d \le n - \sum_{i=0}^t \lfloor \frac{k-1}{r^i} \rfloor.
\end{equation*}
These bounds are proven for all (linear or nonlinear) $(n,k)$ codes by using graph methods. There remains much work to be done in this field, such as discussing tightness of the upper bounds, constructing explicit codes which are optimal with respect to some upper bound, and etc. By far, the only known construction of codes achieving any given locality $r$ and availability $t$ is the direct product code (see, e.g., \cite{tamo2014bounds,wang2013repair}), while other constructions are given for special values of $r$ and $t$.
Table \ref{TableConstructions} lists almost all previous constructions of codes that have locality $r$ and availability $t$ for all coordinates.
\renewcommand\arraystretch{1.4}
\renewcommand\thefootnote{\fnsymbol{footnote}}
\begin{table}[htb]
\centering
\scriptsize
\begin{tabular}[b]{|c|c|c|}
\hline
 & $n,k,d$ & $r,t$ \\ \hline
\makecell{\cite{tamo2014bounds,wang2013repair}: \\ Direct product code}  & $n=(r+1)^t,k=r^t,d=2^t$ & $\forall r $, $\forall t$ \\ \hline
\makecell{\cite{macwilliams1977theory}: Simplex code}  & \makecell{$n=2^m-1$,\\$ k=m$, $d=2^{m-1}$} & \makecell{$r=2$, \\ $t=2^{m-1}-1$ }\\ \hline
\makecell{\cite[Example 1]{prakash2014codes}: \\ complete graph} & $n=\binom{r+2}{2},k=\binom{r+1}{2},d=3$ & $\forall r$, $t=2$ \\ \hline
\makecell{\cite[Construction 3]{tamo2014family}: \\ orthogonal partition }& $n,k,d=n-m+1$\footnotemark[2] & $\forall r$, $t=2$ \\ \hline
\makecell{\cite[Construction 4]{goparaju2014binary}: \\ tensor product matrix}& \makecell{$n=2^m-1$,\\$k=\frac{3}{7}n$, $d=4$} & $r=2,t=3$ \\ \hline
\makecell{Construction in\\ this paper} & \makecell{$n = \binom{r+t}{t},k=\binom{r+t-1}{t}$, \\ $d=t+1$} & $\forall r $, $\forall t$ \\
\hline
\end{tabular}
\caption{}
\label{TableConstructions}
\end{table}
\footnotetext[2]{In \cite{tamo2014family}, the specific values of $n,k$ and $m$ depend on the corresponding orthogonal partition and the encoding map, and the constraints are too complicated to be stated here.}


\subsection{Our Contribution}
For any positive integers $r$ and $t$, we construct a linear code of length $\binom{r+t}{t}$ which has locality $r$ and availability $t$ for all coordinates. Besides arbitrary locality and availability for all coordinates,
the following aspects make the code more desirable.
\begin{itemize}
\item[(1)]The code is over the binary field, which means efficient implementation in practice.
\item[(2)]Its information rate attains $\frac{r}{r+t}$ which is always higher than that of the direct product code.
Although no specific bound on the information rate is newly built in this work, through detailed comparisons with previous constructions and some related bounds, we believe our code has near optimal information rate when $t$ is not too large (say, $t<r$).
\end{itemize}

\subsection{Organization}
Section II introduces formal definitions of locality and availability, as well as some basic concepts about block design.
Section III presents the code construction and reveals its relation with the block design. Section IV gives comparisons with previous constructions and information rate bounds.
Section V concludes the paper.

\section{Locality  and Availability }
Let $\mathcal{C}$ be an $[n,k,d]_q$ linear code with generator matrix $G = (g_1,\dots,g_n)$, where $g_i$ is a $k$-dimensional column vector over $\mathbb{F}_q$ for $1 \le i \le n$.
Denote $[m]=\{1,2,\cdots,m\}$ for any positive integer $m$.
Then the locality $r$ and availability $t$ for linear codes are formally defined below.
\begin{definition}\label{DefAvailability}
The $i$th coordinate, $1 \le i \le n$, of an $[n,k,d]_q$ linear code $\mathcal{C}$ is said to have locality $r$ and availability $t$ if there exist $t$ disjoint subsets $R^{(i)}_1,\dots,R^{(i)}_t \subseteq [n]\setminus \{i\}$ such that for $1 \le j \le t$,
\begin{itemize}
\item[(1)]~$|R^{(i)}_j| \le r$, and
\item[(2)]~$g_i$ is an $\mathbb{F}_q$-linear combination of $\{g_l\}_{l \in R^{(i)}_j}$.
\end{itemize}
\end{definition}
In this paper, we prove locality $r$ and availability $t$ by verifying some equivalent conditions on the dual code. The details can be found in the proof of Theorem \ref{th4}. Then we also need some basic concepts of block design.
\begin{definition}
Let $X$ be a $v$-set (i.e. a set with $v$ elements), whose elements are called points. A $t$-$(v,k,\lambda)$ design is a collection of distinct $k$-subsets (called {\it blocks}) of $X$ with the property that any $t$-subset of $X$ is contained in exactly $\lambda$ blocks.
\end{definition}
Given a $t$-$(v,k,\lambda)$ design with $v$ points $P_1,...,P_v$ and $b$ blocks $B_1,...,B_b$, its $b\times v$ {\it incidence matrix} $A=(a_{ij})$ is defined by
$$a_{ij}=\left\{\begin{array}{ll}1,~&\mbox{if $P_j\in B_i$}\\0,~&\mbox{if $P_j\not\in B_i$}\end{array}\right.$$where $1\leq i\leq b$ and $1\leq j\leq v$.

\section{Code Construction}\label{secCons}
The code is constructed by defining its parity check matrix. For any positive integers $r$ and $t$, let $m=r+t$.
In the following, we define a matrix over $\mathbb{F}_2$, denoted as $H(m,t)$, containing $\binom{m}{t-1}$ rows and $\binom{m}{t}$ columns. Each row of $H(m,t)$ is associated with a $(t-1)$-subset of $[m]$ and each column of $H(m,t)$ is associated with a $t$-subset of $[m]$. Given the elements in $[m]$ ordered as $1\succ 2\succ\cdots\succ m$, we arrange the rows (also columns) of $H(m,t)$ in the lexical order of the associated subsets of $[m]$. More precisely, for any two subsets $E,F\subseteq [m]$ with the elements in each subset sorted in the order $\succ$, then $E$ is before $F$ if and only if for the first elements where $E$ and $F$ differ, say $a$ in $E$ and $b$ in $F$, it holds $a\succ b$. In this order, for $1\leq i\leq \binom{m}{t-1}$ and $1\leq j\leq \binom{m}{t}$, suppose the $i$th row is associated with the subset $E_i$ and the $j$th column is associated with the subset $F_j$, then the $(i,j)$th element $h_{ij}$ of $H(m,t)$ is defined as follows:
$$h_{ij}=\left\{\begin{array}{ll}1,~&\mbox{if $E_i\subseteq F_j$}\\0,~&\mbox{if $E_i\nsubseteq F_j$}\end{array}\right.$$

Let us give an example of $H(m,t)$. Suppose $t=3$ and $m=5$. The matrix $H(5,3)$ is given in Fig. \ref{f1}.
Actually, the matrix $H(m,t)$ is the parity check matrix of the code (denoted as $\mathcal{C}$) we construct in this section. Next we prove some properties of $H(m,t)$ to help understand the code $\mathcal{C}$.
{\renewcommand{\arraystretch}{1.1}
\begin{figure}[!tp]
$$\begin{array}{@{}r@{}c@{}c@{}c@{}c@{}c@{}c@{}c@{}c@{}c@{}c@{}l@{}}
    & 1 & 1 & 1 & 1  & 1 & 1 & 2 & 2 & 2 & 3\\
    & 2 & 2 & 2 & 3  & 3 & 4 & 3 & 3 & 4 & 4\\
    & 3 & 4 & 5 & 4  & 5 & 5 & 4 & 5 & 5 & 5\\\hline
    \left.\begin{array}
    {c}1,2\\1,3\\1,4\\1,5\\2,3\\2,4\\2,5\\3,4\\3,5\\4,5\end{array}\right(
                    & \begin{array}{c}1\\1\\0\\0\\1\\0\\0\\0\\0\\0 \end{array}
                    & \begin{array}{c}1\\0\\1\\0\\0\\1\\0\\0\\0\\0\end{array}
                    & \begin{array}{c}1\\0\\0\\1\\0\\0\\1\\0\\0\\0\end{array}
                    & \begin{array}{c}0\\1\\1\\0\\0\\0\\0\\1\\0\\0\end{array}
                    & \begin{array}{c}0\\1\\0\\1\\0\\0\\0\\0\\1\\0\end{array}
                    & \begin{array}{c}0\\0\\1\\1\\0\\0\\0\\0\\0\\1\end{array}
                    & \begin{array}{c}0\\0\\0\\0\\1\\1\\0\\1\\0\\0\end{array}
                    & \begin{array}{c}0\\0\\0\\0\\1\\0\\1\\0\\1\\0\end{array}
                    & \begin{array}{c}0\\0\\0\\0\\0\\1\\1\\0\\0\\1\end{array}
                    & \begin{array}{c}0\\0\\0\\0\\0\\0\\0\\1\\1\\1\end{array}
                          & \left)\begin{array}{c} \\ \\ \\  \\ \\ \\ \\ \\ \\ \\\end{array}\right.
  \end{array}$$
\caption{The matrix $H(5,3)$}\label{f1}
\end{figure}}

\begin{lemma}
  For $m>t>1$, the matrix $H(m,t)$ is of the block form
  \begin{equation}\label{eq1}
  H(m,t)=\begin{pmatrix}
  H(m-1,t-1)&{\bf 0}\\I_{\scriptscriptstyle\binom{m-1}{t-1}}&H(m-1,t)
  \end{pmatrix}\end{equation}
where $I_{\scriptscriptstyle\binom{m-1}{t-1}}$ is the unit matrix of size $\binom{m-1}{t-1}$ and ${\bf 0}$ is a zero matrix.
Particularly, for $m=t$, $H(m,t)=H(m,1)^\tau=(1,...,1)^\tau$ which is an all-one column vector of dimension $m$.
\end{lemma}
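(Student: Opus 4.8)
The plan is to verify the block decomposition directly from the combinatorial definition of $H(m,t)$, by splitting both the row index set (the $(t-1)$-subsets of $[m]$) and the column index set (the $t$-subsets of $[m]$) according to whether or not they contain the element $m$. Because of the lexical ordering with $1\succ 2\succ\cdots\succ m$, a subset containing $m$ comes after every subset of the same size not containing $m$ (the element $m$ is the ``largest'' in the order $\succ$, hence least preferred). So among the rows, the first $\binom{m-1}{t-1}$ are the $(t-1)$-subsets $E\subseteq[m-1]$ and the last $\binom{m-1}{t-2}$ are those of the form $E'\cup\{m\}$ with $E'\subseteq[m-1]$ a $(t-2)$-subset; likewise the first $\binom{m-1}{t}$ columns are the $t$-subsets $F\subseteq[m-1]$ and the last $\binom{m-1}{t-1}$ are those of the form $F'\cup\{m\}$ with $F'\subseteq[m-1]$ a $(t-1)$-subset. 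This partitions $H(m,t)$ into four blocks, and I would identify each one.

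First I would check the preservation of the internal orderings: removing $m$ from each subset in the bottom group of rows (respectively the right group of columns) is a bijection onto the $(t-2)$-subsets (respectively $(t-1)$-subsets) of $[m-1]$ that respects the lexical order, since deleting a common trailing element does not change which earlier element first distinguishes two subsets. Then each block is computed by asking when $E_i\subseteq F_j$: (i) top-left block, $E\subseteq[m-1]$ a $(t-1)$-set and $F\subseteq[m-1]$ a $t$-set — this is exactly $H(m-1,t-1)$; (ii) top-right block, $E\subseteq[m-1]$ of size $t-1$ and $F'\cup\{m\}$ with $|F'|=t-1$ — then $E\subseteq F'\cup\{m\}$ with $m\notin E$ forces $E\subseteq F'$, and since $|E|=|F'|=t-1$ this means $E=F'$, giving the identity matrix $I_{\binom{m-1}{t-1}}$ — wait, I must be careful about which corner this lands in; the identity block sits in the bottom-left, not the top-right, so let me restate: (ii) the \emph{bottom-left} block has rows $E'\cup\{m\}$ ($|E'|=t-2$) and columns $F\subseteq[m-1]$ ($|F|=t$), and $E'\cup\{m\}\subseteq F$ is impossible since $m\notin F$, so this block is ${\bf 0}$; (iii) the \emph{top-right} block has rows $E\subseteq[m-1]$ ($|E|=t-1$) and columns $F'\cup\{m\}$ ($|F'|=t-1$), and $E\subseteq F'\cup\{m\}$ with $m\notin E$ gives $E\subseteq F'$, hence $E=F'$, which is $I_{\binom{m-1}{t-1}}$; (iv) the bottom-right block has rows $E'\cup\{m\}$ ($|E'|=t-2$) and columns $F'\cup\{m\}$ ($|F'|=t-1$), and $E'\cup\{m\}\subseteq F'\cup\{m\}$ iff $E'\subseteq F'$, which is exactly $H(m-1,t)$. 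Matching this against the claimed block form \eqref{eq1} requires only a relabeling of the two off-diagonal corners; I would simply present the partition so that the identity block and the zero block fall in the stated positions, which amounts to writing the column split with the $F'\cup\{m\}$ columns first among the ``new'' ones or, equivalently, observing the matrix is as in \eqref{eq1} up to the obvious correspondence. (In fact the cleanest route is: put the $m$-containing rows \emph{first} or adjust the column grouping so that rows $E$ of size $t-1$ not containing $m$ pair with columns $E\cup\{m\}$ on the diagonal — I will choose the grouping that yields exactly \eqref{eq1}.)

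For the degenerate case $m=t$, there is a single row, indexed by the unique $(t-1)$-subset $[m]\setminus\{j\}$ for... no: there are $\binom{m}{t-1}=\binom{t}{t-1}=t$ rows, each a $(t-1)$-subset of $[t]$, and a single column indexed by the unique $t$-subset $[t]$ itself; since every $(t-1)$-subset is contained in $[t]$, every entry is $1$, so $H(t,t)=(1,\dots,1)^{\tau}$, the all-one column vector of length $t$, which also equals $H(t,1)^{\tau}$ because $H(t,1)$ is the $1\times t$ all-one row vector. I expect the main obstacle to be purely bookkeeping: pinning down that the lexical order makes the four-way partition align with a contiguous block partition of rows and columns (this is where the convention $1\succ\cdots\succ m$ matters), and making sure the identity and zero blocks land in the corners as written in \eqref{eq1} rather than swapped. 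Once the ordering claims are nailed down, the entrywise verification of each block is immediate from the subset-containment definition.
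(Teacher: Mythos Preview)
Your approach has the right combinatorial idea but pivots on the wrong element. Under the lexical order with $1\succ 2\succ\cdots\succ m$, it is the subsets \emph{containing $1$} that form an initial contiguous block, not those avoiding $m$. Take the paper's own example $m=5$, $t=3$: the row index set (the $2$-subsets of $[5]$) in lex order is
\[
\{1,2\},\ \{1,3\},\ \{1,4\},\ \{1,5\},\ \{2,3\},\ \{2,4\},\ \{2,5\},\ \{3,4\},\ \{3,5\},\ \{4,5\},
\]
and the subsets containing $5$ (namely $\{1,5\},\{2,5\},\{3,5\},\{4,5\}$) are interleaved with those not containing $5$. So your partition by membership of $m$ does \emph{not} give a contiguous block decomposition of $H(m,t)$; the sentence ``a subset containing $m$ comes after every subset of the same size not containing $m$'' is false for this ordering, and this is the step you flagged as ``purely bookkeeping.''

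The fix, which is what the paper does, is to split by the element $1$: the first $\binom{m-1}{t-2}$ rows are the $(t-1)$-subsets containing $1$, the last $\binom{m-1}{t-1}$ are the $(t-1)$-subsets of $\{2,\dots,m\}$, and similarly for columns. Then the upper-left block (rows $\{1\}\cup E'$ with $|E'|=t-2$, columns $\{1\}\cup F'$ with $|F'|=t-1$) is $H(m-1,t-1)$; the upper-right block (row contains $1$, column does not) is ${\bf 0}$; the bottom-left block (row a $(t-1)$-subset $E\subseteq\{2,\dots,m\}$, column $\{1\}\cup F'$ with $|F'|=t-1$, so $E\subseteq\{1\}\cup F'$ forces $E=F'$) is $I_{\binom{m-1}{t-1}}$; and the bottom-right block is $H(m-1,t)$. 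Note also that, even ignoring contiguity, you swapped the labels in your own analysis: a block with rows indexed by $(t-1)$-subsets and columns by $t$-subsets of an $(m-1)$-set is $H(m-1,t)$, not $H(m-1,t-1)$, and your ``bottom-right'' block (rows $(t-2)$-subsets, columns $(t-1)$-subsets of $[m-1]$) is $H(m-1,t-1)$, not $H(m-1,t)$.
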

\begin{proof}
It is obvious that $H(m,1)=(1,...,1)$ and $H(m,m)=(1,...,1)^\tau$. For $m>t>1$, according to the order in which the rows (and columns) of $H(m,t)$ are arranged, the upper left block (i.e. the former $\binom{m-1}{t-2}$ rows and the former $\binom{m-1}{t-1}$ columns) corresponds to the subsets of $[m]$ containing $1$. Actually this block can be regarded as one defined over the set $\{2,3,...,m\}$ with columns corresponding to $(t-1)$-subsets and rows corresponding to $(t-2)$-subsets. Thus this block is the matrix $H(m-1,t-1)$. The upper right block of $H(m,t)$ is obviously ${\bf 0}$. Each row of the bottom left block corresponds a $(t-1)$-subset of $[m]\setminus\{1\}$ which is uniquely contained in a $t$-subset of $[m]$ containing $1$. Moreover, because the subsets are sorted in the lexical order, the bottom left block is the unit matrix of size $\binom{m-1}{t-1}$. Similar to the upper left block, it can see the bottom right block is $H(m-1,t)$.
\end{proof}

When $t > r+1$, the matrix $H(m=r+t,t)$ has more rows than columns. Actually, the following lemma states that the rows of $H(m,t)$ are linearly dependent, so some rows can be deleted when regarded as a parity check matrix.

\begin{lemma}\label{le2}
For the block decomposition of $H(m,t)$ as shown in (\ref{eq1}), each row in the upper block (i.e. $(H(m-1,t-1)~~{\bf 0})$) is a $\mathbb{F}_2$-linear combination of rows in the bottom block $(I_{\scriptscriptstyle\binom{m-1}{t-1}}~~H(m-1,t))$. Consequently, ${\rm rank}~H(m,t)=\binom{m-1}{t-1}$.
\end{lemma}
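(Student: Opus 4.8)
My plan is to induct on $m$ (equivalently on $t$, since the claim is about $H(m,t)$ for all $m>t$ simultaneously, and the recursion in \eqref{eq1} decreases both by one). The base cases are $t=1$, where $H(m,1)=(1,\dots,1)$ is a single row and the "upper block" is empty so there is nothing to prove, and $m=t$, where $H(m,m)=(1,\dots,1)^\tau$ has all rows equal to the $1\times 1$ block $(1)$ sitting below, so each upper row is trivially that bottom row. For the inductive step, fix $m>t>1$ and use the block form
\begin{equation*}
H(m,t)=\begin{pmatrix} H(m-1,t-1)&{\bf 0}\\ I_{\scriptscriptstyle\binom{m-1}{t-1}}&H(m-1,t)\end{pmatrix}.
\end{equation*}
I want to show every row of $\bigl(H(m-1,t-1)\ \ {\bf 0}\bigr)$ is an $\mathbb{F}_2$-combination of rows of $\bigl(I_{\scriptscriptstyle\binom{m-1}{t-1}}\ \ H(m-1,t)\bigr)$.

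The concrete identity I expect to establish is the following. A row of the upper block is indexed by a $(t-2)$-subset $S\subseteq\{2,\dots,m\}$; its nonzero entries in the $H(m-1,t-1)$ part sit in the columns indexed by the $(t-1)$-subsets $T\subseteq\{2,\dots,m\}$ with $S\subseteq T$. I claim this row equals the sum, over all such $T$, of the bottom-block rows indexed by those same $(t-1)$-subsets $T$ (viewed as rows of $I_{\scriptscriptstyle\binom{m-1}{t-1}}$ concatenated with $H(m-1,t)$). Indeed, in the $I_{\scriptscriptstyle\binom{m-1}{t-1}}$ part this sum produces exactly the indicator vector of $\{T : S\subseteq T\}$, which is precisely the $S$-row of $H(m-1,t-1)$; so the left halves match. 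It then remains to check that the right halves cancel, i.e. that
\begin{equation*}
\sum_{\substack{T\subseteq\{2,\dots,m\}\\ |T|=t-1,\ S\subseteq T}} (\text{row }T\text{ of }H(m-1,t)) = {\bf 0}\quad\text{over }\mathbb{F}_2.
\end{equation*}
The $U$-entry (for a $t$-subset $U\subseteq\{2,\dots,m\}$) of this sum counts, mod $2$, the number of $(t-1)$-subsets $T$ with $S\subseteq T\subseteq U$. If $S\not\subseteq U$ this count is $0$. If $S\subseteq U$, then $T$ is obtained by adjoining to $S$ one element of $U\setminus S$, and $|U\setminus S|=t-(t-2)=2$, so there are exactly $2$ choices — even. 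Hence the sum vanishes mod $2$, which is exactly where the binary field is essential.

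The main obstacle — really the only subtle point — is getting the index bookkeeping right: matching "rows of the $I$ block indexed by $(t-1)$-subsets $T$" with "rows of $H(m-1,t)$ indexed by the same $T$," and confirming via the lexicographic ordering of Lemma (the block decomposition) that the $I$ block genuinely implements the identity map between these two index sets. Once that correspondence is pinned down, the cancellation is the clean combinatorial fact $\binom{2}{1}\equiv 0\pmod 2$. The rank statement then follows immediately: the bottom block has $\binom{m-1}{t-1}$ rows and, because of the $I_{\scriptscriptstyle\binom{m-1}{t-1}}$ piece, they are linearly independent, so $\operatorname{rank} H(m,t)=\binom{m-1}{t-1}$; no separate induction is needed for this part.
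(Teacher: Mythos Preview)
Your argument is correct and is essentially the same as the paper's: you identify the same linear combination (sum over all $(t-1)$-subsets $T\supseteq S$ of the bottom rows) and prove the right half vanishes by the same parity count (each column $U\supseteq S$ is hit by exactly $|U\setminus S|=2$ of the chosen rows). Note that your induction framing is superfluous---you never invoke an inductive hypothesis, and the paper likewise gives the argument directly for any fixed $m>t>1$.
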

\begin{proof}
For any row (denoted as $h$) in the upper block, suppose  it is associated with a $(t-1)$-subset $\{1,a_1,...,a_{t-2}\}$ where $a_i\in[m]\setminus\{1\}$ for $1\leq i\leq t-2$. Denote $[m]=\{1,a_1,...,a_{t-2}\}\cup\{b_1,...,b_{m-t+1}\}$.  Then the row $h$ has $1$'s in the columns associated with
the subsets $\{1,a_1,...,a_{t-2},b_j\}$, $1\leq j\leq m-t+1$. As a result, the left part of $h$ is a sum of the left parts of the rows in the bottom block associated with the subsets $\{a_1,...,a_{t-2},b_j\}$, $1\leq j\leq m-t+1$. For simplicity, the collection of these rows is denoted by $R$.  We only need to show the sum (in $\mathbb{F}_2$) of the right parts of rows in $R$ is ${\bf 0}$.

Let us focus on the right part, it can see only the columns associated with the subsets $\{a_1,...,a_{t-2},b_i,b_j\}$, $1\leq i,j\leq m-t+1$, have $1$'s in the rows in $R$.
Furthermore, for each of these columns, say the column associated with $\{a_1,...,a_{t-2},b_i,b_j\}$, there are exactly two rows in $R$, i.e. the row associated with $\{a_1,...,a_{t-2},b_i\}$ and the row associated with $\{a_1,...,a_{t-2},b_j\}$, which have $1$ in that column. Consequently, the sum (in $\mathbb{F}_2$) of right parts of rows in $R$ is ${\bf 0}$. Since the rows in the bottom block of $H(m,t)$ are obviously linearly independent, it follows ${\rm rank}~H(m,t)=\binom{m-1}{t-1}$.
\end{proof}
It is easy to verify that ${\rm rank}~H(m,m)={\rm rank}~H(m,1)=1$ which coincides with the results in Lemma \ref{le2}. Then the parity check matrix of the code $\mathcal{C}$ can be taken as \begin{equation}\label{eq2}H=(I_{\scriptscriptstyle\binom{m-1}{t-1}}~H(m-1,t))\;.\end{equation} Therefore $\mathcal{C}$ is of length $\binom{m}{t}=\binom{r+t}{t}$ and information rate $1-\binom{m-1}{t-1}/\binom{m}{t}=\frac{m-t}{m}=\frac{r}{r+t}$. In the following, we continue to investigate the locality and availability of the code $\mathcal{C}$.
\begin{theorem}\label{th4}
The code $\mathcal{C}$ which has the parity check matrix $H(r+t,t)$ (or $H$ as defined in (\ref{eq2})) has locality $r$ and availability $t$ for all coordinates.
\end{theorem}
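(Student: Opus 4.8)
The plan is to verify the conditions of Definition \ref{DefAvailability} through the dual code $\mathcal{C}^\perp$. The observation I would start from is the standard one: for a linear code, if $c_1,\dots,c_t\in\mathcal{C}^\perp$ satisfy $(c_j)_i\neq 0$, $|{\rm supp}(c_j)\setminus\{i\}|\le r$ for every $j$, and the sets ${\rm supp}(c_j)\setminus\{i\}$, $1\le j\le t$, are pairwise disjoint, then the $i$th coordinate has locality $r$ and availability $t$; indeed, setting $R^{(i)}_j={\rm supp}(c_j)\setminus\{i\}\subseteq[n]\setminus\{i\}$ and using $G\,c_j^\tau={\bf 0}$ expresses $g_i$ as an $\mathbb{F}_2$-linear combination of $\{g_l\}_{l\in R^{(i)}_j}$. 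Recall $m=r+t$. By Lemma \ref{le2} the row space of $H(m,t)$ equals the row space of $H$ in (\ref{eq2}), so $\mathcal{C}^\perp$ contains every row of $H(m,t)$; hence it suffices to pick, for each coordinate, $t$ suitable rows of $H(m,t)$.

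Next I would unwind the combinatorial indexing. A coordinate of $\mathcal{C}$ corresponds to a $t$-subset $F\subseteq[m]$, and the row of $H(m,t)$ indexed by a $(t-1)$-subset $E$ has support $\{F'\subseteq[m]:|F'|=t,\ E\subseteq F'\}$, of size $m-(t-1)=r+1$. Fix $F$. For each $a\in F$ put $E_a:=F\setminus\{a\}$ and let $c_a$ be the row indexed by $E_a$. Since $E_a\subseteq F$, we have $F\in{\rm supp}(c_a)$ with $(c_a)_F=1$, and ${\rm supp}(c_a)=\{F\}\cup\{E_a\cup\{b\}:b\in[m]\setminus F\}$. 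Thus $R_a:={\rm supp}(c_a)\setminus\{F\}=\{E_a\cup\{b\}:b\in[m]\setminus F\}$ consists of exactly $r$ coordinates, none equal to $F$. This already yields locality $r$ and produces $t$ candidate repair groups; only their disjointness remains.

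The step I would single out as the crux---though it is a short set-theoretic argument rather than a real obstacle---is that $R_a$ and $R_{a'}$ are disjoint for distinct $a,a'\in F$. I would argue by tracking membership in $[m]$: every set in $R_a$ has the form $(F\setminus\{a\})\cup\{b\}$ with $b\notin F$, so it contains $a'$ (as $a'\in F$, $a'\neq a$, and $a'\neq b$); on the other hand no set in $R_{a'}$ contains $a'$, since such a set is $(F\setminus\{a'\})\cup\{b'\}$ with $b'\notin F$. Hence $R_a\cap R_{a'}=\emptyset$. Since $|F|=t$, the coordinate $F$ has $t$ pairwise disjoint repair groups of size $r$, and as $F$ was arbitrary the theorem follows.

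Finally I would check the degenerate ranges for consistency: when $t=1$ one has $E_a=\emptyset$ and the single all-ones row of $H(r+1,1)$ supplies the unique repair group of size $r$; and when $t>r+1$ the argument is unaffected because it uses $H(m,t)$ itself (equivalently $\mathcal{C}^\perp$), not the particular full-rank representative in (\ref{eq2}), so the row-rank deficiency resolved by Lemma \ref{le2} causes no trouble.
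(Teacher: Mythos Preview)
Your proof is correct and follows essentially the same route as the paper: both pick, for the coordinate indexed by the $t$-set $F$, the $t$ rows of $H(r+t,t)$ labeled by the $(t-1)$-subsets of $F$, note that each has support size $r+1$, and then verify that these supports meet only at $F$. The only cosmetic difference is in the disjointness step, where the paper argues by contradiction via $|E_j\cup E_l|\ge t>|F_i\cap F_u|$ while you give a direct element-membership argument; your version is arguably a touch cleaner, and your explicit invocation of Lemma~\ref{le2} to place the rows of $H(m,t)$ in $\mathcal{C}^\perp$ is a nice clarification the paper leaves implicit.
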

\begin{proof}
It is equivalent to prove that for each coordinate $i\in[\binom{r+t}{t}]$ there exist $t$ codewords in the dual code, say ${\bf c}_1,...,{\bf c}_t$, such that $|{\rm supp}~{\bf c}_j|=r+1$ and ${\rm supp}~{\bf c}_j~\cap~{\rm supp}~{\bf c}_l=\{i\}$ for $1\leq j\neq l\leq t$. Actually, we will see the rows of $H(r+t,r)$ are exactly these codewords.

First, for any row in $H(r+t,r)$, suppose it is associated with a $(t-1)$-subset $E\subseteq[r+t]$. Since $E$ is contained in $r+t-(t-1)=r+1$ $t$-subsets of $[r+t]$, this row has $r+1$ $1$'s. Namely, the support of each row is of size $r+1$.

Then, for each coordinate $i\in[\binom{r+t}{t}]$  which corresponds to a column of $H(r+t,t)$, suppose this column is associated with a $t$-subset $F_i\subseteq[r+t]$. Because $F_i$ contains $t$ $(t-1)$-subsets, there are $t$ rows which have $1$ in this column. We claim that excluding the coordinate $i$, supports of these $t$ rows are pairwise disjoint. Otherwise, assume there are two rows, say the $j$th row (denoted as $h_j$, associated with the subset $E_j$) and the $l$th row (denoted as $h_l$, associated with the subset $E_l$), such that $\{i,u\}\subseteq {\rm supp}~h_j~\cap~{\rm supp}~h_l$ for some $u\in[r+t]\setminus\{i\}$. It implies that $$E_j\subseteq F_i\cap F_u~{\rm and}~E_l\subseteq F_i\cap F_u\;.$$As a result, $E_j\cup E_l\subseteq F_i\cap F_u$. But the union (resp. intersection) of two different $(t-1)$-subsets (resp. $t$-subsets) is of size at least $t$ (resp. at most $t-1$), which leads a contradiction. Therefore, the $t$ rows are exactly the codewords ${\bf c}_1,...,{\bf c}_t$ we need to complete the proof.
\end{proof}

Finally we determine the minimum distance of the code $\mathcal{C}$ from its parity matrix $H$ given in (\ref{eq2}). Since each column in the right part of $H$, i.e. $H(m-1,t)$, has $t$ $1$'s and the left part of $H$ is a unit matrix, there exist $t+1$ columns in $H$ which are linearly dependent. Thus the minimum distance of  $\mathcal{C}$ is at most $t+1$. On the other hand, from the availability $t$ it can see any $t$ erasures are recoverable for $\mathcal{C}$, thus the minimum distance is at least $t+1$. Therefore the minimum distance of $\mathcal{C}$ is $t+1$.

\subsection{Relation with the block design}
Actually, the matrix $H(r+t,t)$ can be viewed as an incidence matrix of a $1$-$(\binom{r+t}{t},r+1,t)$ design. Moreover, suppose the blocks are $B_1,...,B_b$, where $b=\frac{t}{r+1}\cdot\binom{r+t}{t}=\binom{r+t}{t-1}$, then it  holds
\begin{equation}\label{con3}|B_i\cap B_j|\leq 1 \mbox{~for~} 1\leq i<j\leq b
\end{equation}
In other words, once we find a $1$-$(n,r+1,t)$ design with blocks satisfying the condition (\ref{con3}), it immediately  derives a linear code of length $n$ with locality $r$ and availability $t$ by taking its incidence matrix as the parity check matrix of the code. To make the resulting code has high information rate, the incidence matrix needs to have low rank comparing with its column size. However, it is difficult to find such $1$-designs, constructing those with incidence matrices of low rank is even harder. Our construction of the matrix $H(m,t)$ provides a good way to do such things. Besides, some constructions from geometry are also feasible. For example, the following matrix $H$ gives a  $1$-$(9,3,2)$ design satisfying the property (\ref{con3}), i.e.
$$H=\begin{pmatrix}
1&0&0&1&0&0&1&0&0\\0&1&0&0&1&0&0&1&0\\0&0&1&0&0&1&0&0&1\\1&1&1&0&0&0&0&0&0\\0&0&0&1&1&1&0&0&0\\0&0&0&0&0&0&1&1&1
\end{pmatrix}\;.$$
Then it induces a binary linear code with locality $r=2$ and availability $t=2$, but its information rate is $\frac{4}{9}$ which is less than $\frac{r}{r+t}$. In fact, this matrix $H$ corresponds to the direct product code construction. More details can be found in the next section.

\section{Comparisons with Other Constructions and Information Rate Bounds}
The information rate and minimum distance are two important parameters for evaluating an error-correcting code. It is well known that a tradeoff exists between these two parameters. Although the minimum distance of the code constructed in Section \ref{secCons} is $t+1$ which is the lowest for a locally repairable code with availability $t$, we will see it performs well in information rate through the following comparisons.

\subsection{Comparison with Other Constructions}
\subsubsection{The direct product code}
The direct product code, see e.g., \cite{wang2014repair,tamo2014bounds}, is another code that can achieve {\it arbitrary} locality and availability.
Specifically, the direct product of $t$ binary $(r+1,r)$ single-parity-check codes induces a code with locality $r$, availability $t$ and information rate $(\frac{r}{r+1})^t$.
The code we constructed in Section \ref{secCons} also achieves locality $r$ and availability $t$, but has information rate $\frac{r}{r+t}$.
Because $(1+\frac{1}{r})^t > 1+\frac{t}{r}$ for all $t>1$, it follows that $(\frac{r}{r+1})^t = 1/(1+\frac{1}{r})^t < 1/(1+\frac{t}{r})=\frac{r}{r+t}$.
Thus, when $t>1$, our code always has higher information rate than the direct product code with the same locality $r$ and availability $t$. Fig. \ref{FigRateCompasison1} and Fig. \ref{FigRateCompasison2} display the $t$-$\frac{k}n$ curves of the codes.

\begin{figure}[htb]
\includegraphics[width=0.48\textwidth]{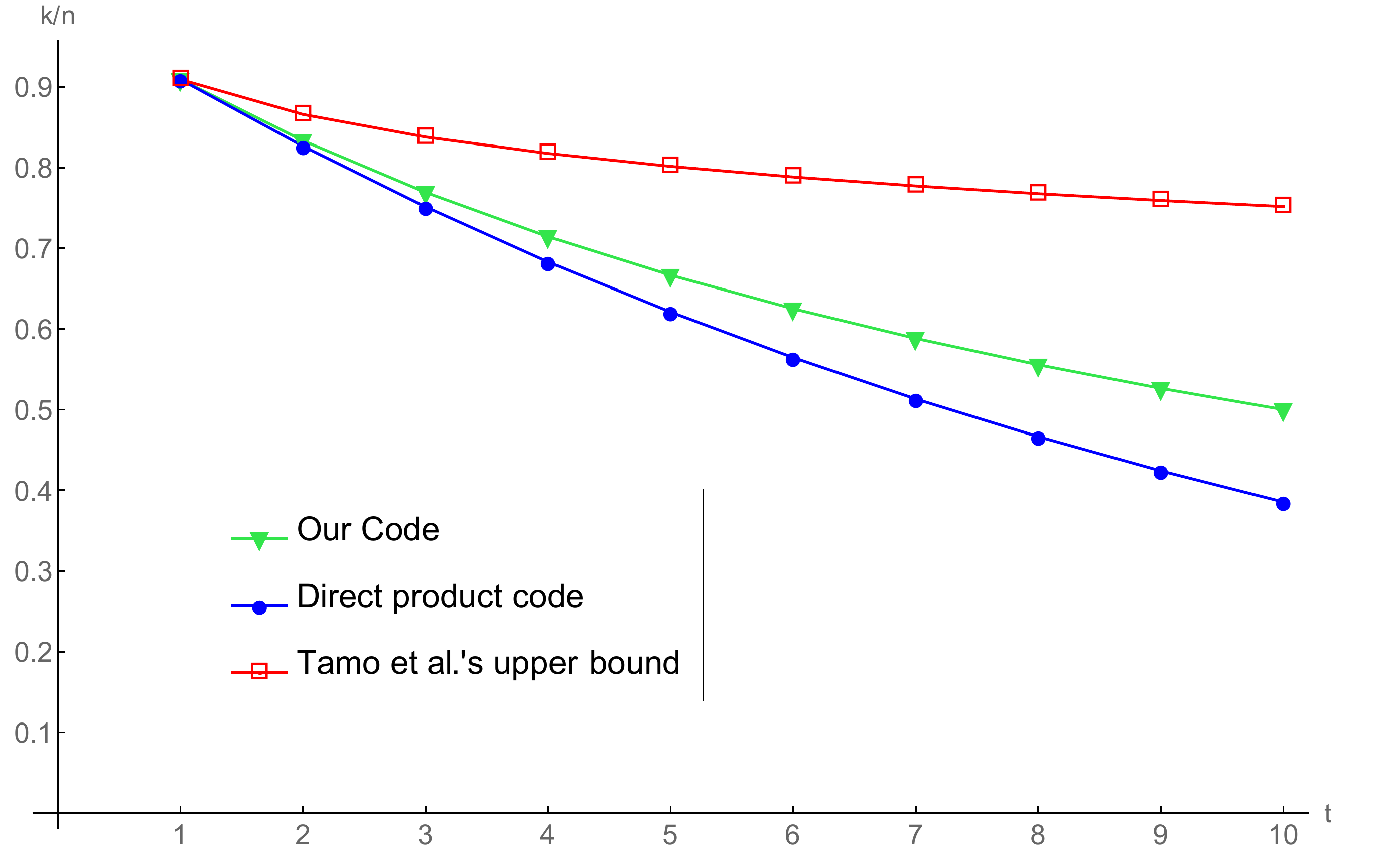}
\caption{Comparison of the information rate for $r=10$, $1 \le t \le 10$.}
\label{FigRateCompasison1}
\end{figure}

\begin{figure}[htb]
\includegraphics[width=0.48\textwidth]{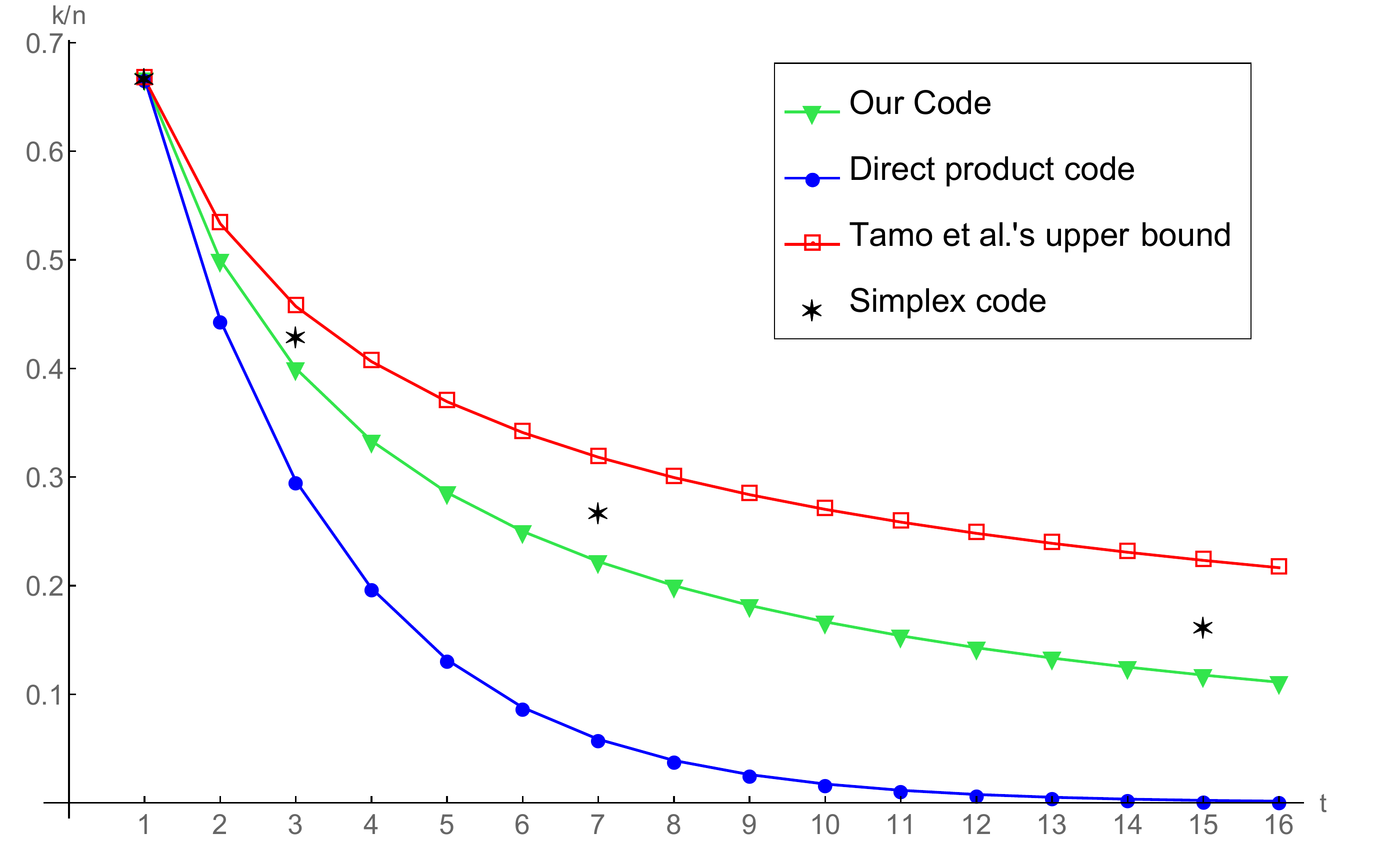}
\caption{Comparison of the information rate for $r=2$, $1 \le t \le 16$.}
\label{FigRateCompasison2}
\end{figure}

\subsubsection{Prakash et al's construction}
Recently, Prakash et al. \cite{prakash2014codes} presented a construction of locally $2$-reconstructible codes by using Tur{\'a}n graphs.
It was also shown in \cite{prakash2014codes} that the resulting codes have $2$-availability when using complete graphs instead of using Tur{\'a}n graphs.
Particularly, a complete graph with $v$ vertices induces a binary code with locality $r=v-1$ and availability $t=2$. This code has length $\frac{(r+1)(r+2)}{2}$ and information rate $\frac{r}{r+2}$.
In fact, after a permutation of the columns of the parity check matrix they constructed there, one can get exactly the matrix $H(r+2,2)$. That is, the code is equivalent to our construction for the very special case of $t=2$, while our work develops a general construction for arbitrary $t$.

\subsection{Comparison with Information Rate Bounds}
The only known bound on the information rate of locally repairable codes with availability $t$ for all coordinates is due to Tamo et al. \cite{tamo2014bounds} (see the bound (\ref{EqTamoRateBound}) stated in Section \ref{sec1}).
There does exist a gap between this upper bound and the rate we have achieved (i.e. $\frac{r}{r+t}$). But the following remarks help neutralize this difference.
\begin{itemize}
  \item[(1) ]To our knowledge, no codes attaining  the bound (\ref{EqTamoRateBound}) have been given (except for the special case of $t=1$). It was pointed out in \cite{tamo2014bounds} that  information rate of the direct product code is close to the bound (\ref{EqTamoRateBound}) for $t=2$. As we have seen, our code outperforms the direct product code in the information rate. Thus we get even closer to the upper bound (\ref{EqTamoRateBound}).

      Actually, in \cite{prakash2014codes} Prakash et al. derived an upper bound for the linear locally $2$-reconstructible code, i.e.,
\begin{equation}\label{EqPrakashRateBound}
\frac{k}{n} \le \frac{r}{r+2}.
\end{equation}
Since the locally repairable code with availability $2$ is a special class of locally $2$-reconstructible codes, the bound (\ref{EqPrakashRateBound}) also applies to the code we considered in this paper for the case $t=2$. On the other hand, our construction (also the construction given in \cite{prakash2014codes}) has proved tightness of the upper bound $\frac{r}{r+2}$.
As a comparison, when $t=2$ the upper bound (\ref{EqTamoRateBound}) exceeds $\frac{r}{r+2}$ by $\Omega(\frac1r)$.

  \item[(2) ]Our code is a binary one, while the bound (\ref{EqTamoRateBound}) is proved for all codes with locality $r$ and availability $t$. It is unsurprised that the field size compromises the information rate of a code sometimes.
  \item[(3) ]An upper bound on the information rate of codes with $(r,\delta)$ locality is given in \cite{song2014optimal}, i.e. $\frac{k}{n}\leq\frac{r}{r+\delta-1}$. The locality $(r,\delta)$ as introduced in \cite{prakash2012optimal} maintains the locality $r$ even in the case of $\delta-1$ erasures. It is easy to see that the requirement of locality $r$ and availability $t$ also guarantees the locality $r$ in the case of $t$ erasures. By letting $t=\delta-1$, our codes attains the upper bound for the codes with $(r,\delta)$ locality. Therefore, it is reasonable to believe that the information rate $\frac{r}{r+t}$ is near to the optimal for the codes with locality $r$ and availability $t$, especially for the case that $t$ is not too large (for example, $t<r$). As displayed in Fig. \ref{FigRateCompasison1}, we believe the curve of our code is closer to the optimal curve (which we have not obtained definitely) than the bound (\ref{EqTamoRateBound}).

  \item[(4) ]However, for large $t$, there do exist codes  which have information rate exceeding $\frac{r}{r+t}$.
      For example, the binary simplex code of length $n=2^m-1$ has locality $r=2$ and availability $t=2^{m-1}-1$. Its information rate is $\frac{m}{2^m-1}$ greater than $\frac{r}{r+t} = \frac{2}{2^{m-1}+1} $ for all $m \ge 3$. A comparison between all these codes and bounds is shown in Fig. \ref{FigRateCompasison2}. Therefore, to characterize the optimal information rate for codes with locality and availability, we still have a long way to go.
\end{itemize}

\section{Conclusions}
In this paper we construct a binary linear code with arbitrary locality $r$ and availability $t$.
The code can always have higher information rate than the direct product code which is the only known construction with the same property. Besides, it attains the optimal information rate at $t=2$.
This construction reveals a connection with special block designs, which may help to get more results on the codes with  locality and availability.

\end{document}